\documentclass[10pt]{article}

\usepackage{amsmath}
\usepackage{amssymb}

\usepackage{graphicx}

\usepackage{cite}
\usepackage{color}
\usepackage{soul}

\topmargin 0.0cm
\oddsidemargin 0.5cm
\evensidemargin 0.5cm
\textwidth 16cm 
\textheight 21cm

\usepackage[labelfont=bf,labelsep=period,justification=raggedright]{caption}

\bibliographystyle{plain}

\makeatletter
\renewcommand{\@biblabel}[1]{\quad#1.}
\makeatother

\date{}

\pagestyle{myheadings}

\usepackage{amsthm}

\newtheorem{lemma}{Lemma}
\newtheorem{theorem}{Theorem}


\begin{document}

\begin{flushleft}
{\Large
\textbf{The Stubborn Roots of Metabolic Cycles}
}
\\
Ed Reznik$^{1,\ast}$, 
Alex Watson$^{2}$, 
Osman Chaudhary$^{3}$
\\
\bf{1} Department of Biomedical Engineering, Boston University, Boston MA 02215
\\
\bf{2} Department of Applied Physics and Applied Mathematics, Columbia University, New York NY 10027
\\
\bf{3} Department of Mathematics, Boston University, Boston MA 02215

$\ast$ E-mail: ereznik@bu.edu
\end{flushleft}

\section*{Summary}
Efforts to catalogue the structure of metabolic networks have generated highly detailed, genome-scale atlases of biochemical reactions in the cell. Unfortunately, these atlases fall short of capturing the kinetic details of metabolic reactions, instead offering only \textit{topological} information from which to make predictions. As a result, studies frequently consider the extent to which the topological structure of a metabolic network determines its dynamic behavior, irrespective of kinetic details. Here, we study a class of metabolic networks known as non-autocatalytic metabolic cycles, and analytically prove an open conjecture regarding the stability of their steady-states. Importantly, our results are invariant to the choice of kinetic parameters, rate laws, equilibrium fluxes, and metabolite concentrations. Unexpectedly, our proof exposes an elementary but apparently open problem of locating the roots of a sum of two polynomials $S = P+Q$, when the roots of the summand polynomials $P$ and $Q$ are known. We derive two new results named the Stubborn Roots Theorems, which provide sufficient conditions under which the roots of $S$ remain qualitatively identical to the roots of $P$. Our work illustrates how complementary feedback, from classical fields like dynamical systems to biology and vice versa, can expose fundamental and potentially overlooked questions.

\section*{Keywords}
Metabolic networks, Stability, Dynamical Systems, Polynomial, Roots

\section*{Introduction}
Networks of enzyme-catalyzed metabolic reactions are fundamental to the proliferation of life, using the energy extracted from environmental nutrients to drive the assembly of organic macromolecules and enable the successful reproduction of the cell. The large-scale architecture of these networks is rich in structure: they are broadly organized into overlapping pathways (\textit{e.g.} catabolic glycolysis and anabolic glucoeneogenesis), and exhibit power-law-like degree distributions with highly connected cofactors (\textit{e.g.} ATP/ADP and NADH/NAD) linking many otherwise-distant metabolites \cite{Ravasz2002}. Perhaps most importantly, these networks are capable of robust operation in spite of heterogeneity in the abundances of crucial enzymes and substrates \cite{Shinar2010}. 

To what extent are the robust features of metabolic networks determined by the underlying topological structure of the network itself? This question lies at the center of many studies precisely because contemporary metabolic models are largely limited to structural information. Using genomic data, it is now possible to reconstruct genome-scale models of metabolism, which predict the presence of absence or enzymes (and by virtue, metabolic reactions) in an organism. However, the kinetic details of these reactions (such as the rate laws they obey, as well as rate constants like $K_M$ or $V_{max}$) are largely unknown due to the difficulty of measuring them in a high-throughput manner \textit{in vivo}. As a result, a number of generic results linking the qualitative dynamics of a chemical reaction network (such as its potential for multistability or sustained oscillations) to its structural organization \cite{Feinberg1987, Feinberg1988} have begun to populate the literature, pointing to a fundamental connection between structure and dynamics. These results make remarkably minimal and physically reasonable assumptions on the generic form of kinetic rate laws, that render their conclusions largely independent of the choice of parameters. Perhaps the most well-known result from studies of this sort is the Deficiency Zero Theorem \cite{Feinberg1987} from chemical reaction network theory (CRNT), which gives sufficient conditions for unique equilibria and asymptotic stability of a large class of reaction networks. More recent work extending CRNT, such as that of Shinar and Feinberg \cite{Shinar2010}, has identified structural properties endowing chemical reaction networks with absolute concentration robustness (ACR) (\textit{i.e.} the steady-state concentration of a molecular species is identical in any steady-state the dynamical system admits). Importantly, other existing methodologies, including Chemical Organization Theory \cite{Dittrich2007, Kreyssig2012} and the theory of monotone systems \cite{Angeli2004}, take similar ``topological'' approaches to understanding how dynamics may be inferred from, and in fact directly influenced by, the structure of reaction networks themselves.

In prior work, we studied a family of non-autocatalytic metabolic cycles \cite{Reznik2010}, and considered the role that their cyclic topology might play in determining their steady-state properties. Our decision to study a cycle, as opposed to any other structure, was motivated by the prominent role of cycles (such as the TCA and Calvin cycles) in present-day metabolic networks \cite{Berg2002}. The approach we took relied on a method known as structural kinetic modeling (SKM), which applied a change of variables to the dynamical system corresponding to the metabolic cycle. This change of variables enabled us to study the dynamics of the metabolic cycle from a structural point of view, with only mild assumptions on the form of the kinetics themselves.

The main outcome of our work in \cite{Reznik2010} was \textit{limited numerical evidence} that any steady-state of the cycle must be stable to small perturbations, irrespective of equilibrium metabolite concentrations, flux magnitudes, or the choice of kinetic parameters. However, we were unable to offer a rigorous analytical proof of this claim. In particular, it was unclear whether small regions of parameter space harboring unstable equilibria might exist. Perhaps more importantly, computational considerations limited our numerical investigations to relatively short metabolic cycles (including up to eight metabolites), leaving open the possibility that instability appeared as cycles grew longer. As a result, we left the question of stability as an unproven conjecture (herein referred to as the \textit{cycle stability conjecture}). This conjecture is the object of study in the first part of this work.

The difficulty with proving the cycle stability conjecture reduced to locating the roots of a high-order polynomial in the complex plane. Although a number of classical results from control theory are commonly applied to problems like this, they were rendered largely unusable for the polynomial in \cite{Reznik2010}. For example, the Routh-Hurwitz (RH) criterion, perhaps the best-known technique for constraining the locations of a polynomial's roots, requires the precise calculation of coefficients of the polynomial under study. Unfortunately, the calculation of these coefficients for the polynomial in \cite{Reznik2010} became analytically intractable as the number of reactions in the cycle (which we assumed to be arbitrarily large) grew. Furthermore, because these coefficients were themselves functions of SKM variables, and were only constrained to lie in complicated intervals on the real axis, the problem of proving stability became substantially more difficult. Well-known methods, such as Kharitonov's theorems \cite{Dasgupta1988}, exist to study how uncertainty in the coefficients of a polynomial impacts its roots. In addition, the Method of Resultants, used by Gross and Feudel in the context of studying generalized models \cite{Gross2004}, can be used to identify when pairs of imaginary roots cross the imaginary axis (suggesting the onset of instability and sustained oscillations).  However, the complexity and scale of the polynomial in \cite{Reznik2010} again rendered the application of both Kharitonov's theorems and the Method of Resultants infeasible.

Our difficulty with bringing classical tools to bear on the cycle stability conjecture motivated us to revisit the problem from a completely new perspective. In this work, we resolve the cycle stability conjecture by reformulating it as a question of locating the roots of a sum of two polynomials. By doing so, we are able to apply a classical technique (Rouch\'{e}'s Theorem \cite{BrownJamesWardChurchill1996}) from complex analysis to resolve the conjecture and prove the stability of non-autocatalytic metabolic cycles. Quite unexpectedly, our proof leads us to a substantially more general question: how do the roots of a sum of polynomials $S = P+Q$ depend on the roots of $P$ and $Q$ themselves? Using a method identical to the one used to prove the cycle stability conjecture, we prove two new theorems, which we call the Stubborn Roots Theorems, which give sufficient conditions for when the roots of $S$ are qualitatively identical to the roots of $P$. To our knowledge, there are few generic results which provide information regarding the locations of the roots of a sum of two polynomials. Given the fundamental importance of locating the roots of a polynomial in the study of dynamical systems (and in applications of dynamical systems to fields such as in systems biology), we feel that the Stubborn Roots Theorems may find use in other contexts.

\section*{Results}

\subsection*{Stability of Metabolic Cycles}
We begin by presenting a model of the dynamics of a simple non-autocatalytic metabolic cycle, and then proceed to using SKM to study the stability of its equilibria. First, we describe the generic structure of the metabolic cycle under study, which is identical to the one studied in \cite{Reznik2010}. The cycle contains $n$ metabolites ($M_1 \hdots M_n$) and two cofactors ($O_1$ and $O_2$) which provide the energetic force to thermodynamically drive the metabolic reactions, and can be illustrated by

\begin{align}
\emptyset &\longrightarrow M_{1} \nonumber \\
M_{1} + O_{1} &\longrightarrow M_{2} + O_{2} \nonumber \\
M_{i} &\longrightarrow M_{i+1}, \hspace{2mm}  i = 2\hdots n-1 \nonumber \\
M_{n} &\longrightarrow M_{1} \nonumber \\
M_{n} &\longrightarrow \emptyset \nonumber \\
O_2 &\xrightarrow{energy} O_1 
\label{eq:reactions}
\end{align}

In this cycle, each metabolite $M_i$ is converted to metabolite $M_{i+1}$ for $i = 1 \hdots n-1$. A constant flux of $M_1$ enters the system. A proportion of the last metabolite $M_n$ is converted back to $M_1$, while the remainder leaves the system. The high-energy cofactor $O_1$ is converted to its low-energy cofactor partner $O_2$ in the reaction catalyzing the conversion of $M_1$ to $M_2$. In a separate reaction, energy is input into the system to drive the reformation of the higher energy molecule $O_1$. 

At steady-state, the magnitude of the flux through each reaction in the cycle can be calculated by enforcing mass balances on each metabolite. To do so, we assume that a constant flux of generic magnitude $\alpha v, 0<\alpha<1$ of metabolite $M_1$ flows into the network. A proportion $(1-\alpha)v$ of the flux entering $M_n$ is channeled back towards $M_1$, while the remaining flux $\alpha v$ exits the system.  All other reactions carry a steady-state flux of $v$. It is easily verified that this flux vector is in the nullspace of the stoichiometric matrix $\mathbf{S}$ (see Appendix). We assume that the kinetics of each reaction are monotonic, that is, that an increase in the concentration of substrate for any reaction will consequently increase the rate of the reaction. This assumption is quite generic, and is amenable with many well-known biochemical reaction mechanisms, including the law of mass-action as well as Michaelis-Menten and Hill kinetics. 

To prove the stability of a steady-state of (\ref{eq:reactions}), we must prove that the Jacobian of (\ref{eq:reactions}), evaluated at an arbitrary steady-state, always has eigenvalues with negative real part. As shown in \cite{Reznik2010} (and re-derived in the Appendix, Equations (\ref{eq:appendixS}-\ref{eq:appendixtheta})), the Jacobian $J_n$ for the metabolic cycle of size $n$ illustrated above can be calculated using SKM to be

\begin{align}
J_{n} =
\begin{vmatrix}
-\theta_{1} & 0 & ... & 0 & \theta_{n} & -\theta_{n+2} \\
\theta_{1} & -\theta_{2} & ... & 0 & 0 & \theta_{n+2} \\
0 & \theta_{2} & \ddots & \vdots & \vdots & \vdots \\
\vdots & \vdots & \ddots & -\theta_{n-1} & 0 & 0 \\
0 & 0 & ... & \theta_{n-1} & -\theta_{n} - \theta_{n+1} & 0 \\
-\theta_{1} & 0 & ... & 0 & 0 & -\theta_{n+2} - \theta_{n+3}.
\end{vmatrix}\label{eq:Jgeneral}
\end{align}

Crucially, the assumption of monotonic kinetics constrains all the elasticities, $\theta_i$, in (\ref{eq:Jgeneral}) to be greater than zero (see Appendix). In \cite{Reznik2010}, we provided evidence $J_n$ could not have eigenvalues with positive real part. Below, we proceed to analytically prove this conjecture. To simplify some calculations, we elect to work with the negative counterpart of the Jacobian, $J_n^- = -J_n$, and prove that $J_n^-$ cannot have eigenvalues with negative real part. This is equivalent to proving that $J_n$ cannot have eigenvalues with positive real part.

First, we calculate the characteristic polynomial of $J_n^-$, which we call $\chi_n(\lambda)$, explicitly (calculations shown in Appendix):
\begin{align}
P_n &= \left( (\theta_1 - \lambda)(\theta_{n+2} + \theta_{n+3} - \lambda) - \theta_{1}\theta_{n+2} \right)(\theta_{n} + \theta_{n+1}-\lambda)\prod_{i = 2\hdots n-1}{(\theta_i-\lambda)} \label{eq:Pn} \\
Q_1 &= -(\theta_{n+3} - \lambda)\prod_{i = 1\hdots n}{\theta_i} \\
\chi_{n}(\lambda) &= P_n + Q_1
\label{eq:definitions}
\end{align}

\noindent Thus, $\chi_n$ is the sum of two polynomials, an $n+1^{th}$ order polynomial $P_n$ and a first-order polynomial $Q_1$ (note that the subscript $n$ denotes the size of the cycle, not the degree of the polynomial). Next, we prove three lemmas on the relative location of the roots of $P_n$ and $Q_1$, showing that they are strongly constrained. Later on, these constraints will be crucial to proving that $\chi_n$ can only have roots with positive real part.

\begin{lemma}{All of the roots of $P_n$ are positive and real.}
\end{lemma}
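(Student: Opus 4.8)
The plan is to exploit the fully factored form of $P_n$ given in (\ref{eq:Pn}) and analyze each factor separately, since the roots of a product are exactly the union of the roots of its factors. I would first peel off the two ``easy'' groups of factors. The single linear factor $(\theta_n + \theta_{n+1} - \lambda)$ contributes the root $\lambda = \theta_n + \theta_{n+1}$, which is real and positive because both elasticities are positive. Likewise, the product $\prod_{i=2}^{n-1}(\theta_i - \lambda)$ contributes the roots $\lambda = \theta_i$ for $i = 2,\dots,n-1$, each real and positive for the same reason. This already accounts for $n-1$ of the $n+1$ roots, so all the work concentrates in the remaining quadratic factor.

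The heart of the argument is therefore the quadratic $\left((\theta_1 - \lambda)(\theta_{n+2} + \theta_{n+3} - \lambda) - \theta_1\theta_{n+2}\right)$. Writing $a = \theta_1$, $b = \theta_{n+2}$, $c = \theta_{n+3}$ for brevity, I would expand and simplify this to
\begin{align}
\lambda^2 - (a + b + c)\lambda + ac, \nonumber
\end{align}
noting that the two $ab$ terms cancel. I would then locate its roots by Vieta's formulas: the product of the roots is $ac > 0$ and their sum is $a + b + c > 0$. Thus, provided the roots are real, they must both be positive (a positive product rules out opposite signs, and a positive sum then rules out both being negative).

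The main obstacle is confirming that the two roots are in fact real, i.e.\ that the discriminant is nonnegative. I would compute
\begin{align}
(a + b + c)^2 - 4ac = (a - c)^2 + b(b + 2a + 2c), \nonumber
\end{align}
which is a sum of a perfect square and a manifestly positive term (since $a, b, c > 0$), hence strictly positive. This completes-the-square step is the one genuinely non-obvious computation; everything else is bookkeeping. With the discriminant positive, the quadratic has two distinct real roots, which by the Vieta argument above are both positive, supplying the final two positive real roots of $P_n$. Collecting the three groups of factors then shows that all $n+1$ roots of $P_n$ are positive and real, as claimed.
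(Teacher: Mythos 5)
Your proof is correct and follows essentially the same approach as the paper's: peeling off the $n-1$ obvious positive real roots, expanding the quadratic factor to $\lambda^2 - (\theta_1+\theta_{n+2}+\theta_{n+3})\lambda + \theta_1\theta_{n+3}$, showing the discriminant is positive by completing the square, and concluding positivity of the remaining two roots via Vieta's formulas. The only difference is cosmetic ordering (you invoke Vieta before checking the discriminant, the paper does the reverse), so there is nothing to change.
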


\begin{proof}

By inspection, at least $n-1$ of $P_n$'s roots, contained in the product term of (\ref{eq:Pn}), must be positive and real. For the remaining two roots, we must study the quadratic polynomial 

\begin{align}
(\theta_1 - \lambda)(\theta_{n+2} + \theta_{n+3} - \lambda) - \theta_{1}\theta_{n+2}.
\label{eq:lemma1_1}
\end{align}

First, we prove that the roots of (\ref{eq:lemma1_1}) must be real. Calculating the discriminant $\Delta$ of this quadratic polynomial, we find

\begin{align}
\Delta &= (\theta_1 + \theta_{n+2} + \theta_{n+3})^2 - 4\theta_1\theta_{n+3} \nonumber \\
&= \theta_1^2 + \theta_{n+2}^2 + \theta_{n+3}^2 + 2\theta_{1}\theta_{n+2} + 2\theta_1\theta_{n+3} + 2\theta_{n+2}\theta_{n+3} - 4\theta_1\theta_{n+3} \nonumber \\
&= \theta_1^2 - 2\theta_1\theta_{n+3} + \theta_{n+3}^2 + \theta_{n+2}^2 + 2\theta_{n+2}\theta_{n+3} + 2\theta_1\theta_{n+2} \nonumber \\
&= (\theta_1 - \theta_{n+3})^2 + \theta_{n+2}^2 + 2\theta_{n+2}\theta_{n+3} + 2\theta_1\theta_{n+2} \nonumber \\
&> 0. \nonumber
\end{align}

Since $\Delta>0$, (\ref{eq:lemma1_1}) cannot have imaginary roots and all of the roots of $P_n$ are purely real.

Next, we show that the pair of roots of (\ref{eq:lemma1_1}) must be positive. If we expand (\ref{eq:lemma1_1}), we find 
\begin{align}
\lambda^2 - \lambda(\theta_1+\theta_{n+2}+\theta_{n+3}) + \theta_1\theta_{n+3} = 0.
\label{eq:lemma1_2}
\end{align}

The product of the two roots of (\ref{eq:lemma1_2}) are $\theta_1\theta_{n+3} > 0$, and the sum of the roots is $\theta_1+\theta_{n+2}+\theta_{n+3} > 0$. Therefore, both of the roots of (\ref{eq:lemma1_2}) must both be positive. 

\end{proof}

\noindent We next prove a related lemma regarding the location of the root of $Q_1$.

\begin{lemma}{The root $r_q$ of $Q_1$ must be larger than at least one root of $P_n$, and smaller than another root of $P_n$.}
\end{lemma}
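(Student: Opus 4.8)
The plan is to pin down $r_q$ explicitly and then locate it relative to the two roots of the quadratic factor of $P_n$ identified in Lemma 1. Since $\prod_{i=1\hdots n}\theta_i$ is a strictly positive constant, the only root of $Q_1$ is $r_q = \theta_{n+3}$. Among the $n+1$ roots of $P_n$, I would single out the pair $\lambda_-<\lambda_+$ arising from the quadratic (\ref{eq:lemma1_2}), which Lemma 1 already guarantees are real, positive, and (because $\Delta>0$) distinct. It then suffices to show that $r_q$ lies strictly between $\lambda_-$ and $\lambda_+$, since both are roots of $P_n$; this exhibits one root of $P_n$ below $r_q$ and one above it.

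The central step is a single sign computation. Writing $g(\lambda) = \lambda^2 - \lambda(\theta_1+\theta_{n+2}+\theta_{n+3}) + \theta_1\theta_{n+3}$ for the quadratic in (\ref{eq:lemma1_2}), I would evaluate it at $\lambda = r_q = \theta_{n+3}$. The quadratic term $\theta_{n+3}^2$ cancels against the $-\theta_{n+3}^2$ coming from the linear term, and the $\theta_1\theta_{n+3}$ from the linear term cancels against the constant $\theta_1\theta_{n+3}$, leaving $g(\theta_{n+3}) = -\theta_{n+2}\theta_{n+3}$, which is strictly negative because every elasticity is positive.

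Finally, I would invoke the geometry of the parabola: $g$ is monic, so it opens upward and is negative precisely on the open interval $(\lambda_-,\lambda_+)$ spanned by its two distinct real roots. Hence $g(\theta_{n+3})<0$ forces $\lambda_- < \theta_{n+3} < \lambda_+$. Because $\lambda_-$ and $\lambda_+$ are both roots of $P_n$, this produces a root of $P_n$ strictly smaller than $r_q$ and a root of $P_n$ strictly larger than $r_q$, which is exactly the claim.

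I do not anticipate a substantial obstacle beyond careful bookkeeping. The only two points that require attention are that the comparison must be made against the quadratic factor rather than the linear factors $(\theta_i-\lambda)$ or $(\theta_n+\theta_{n+1}-\lambda)$, whose individual roots need not bracket $\theta_{n+3}$, and that Lemma 1 is genuinely needed: it certifies both that the two bracketing roots are real and that they are distinct (via $\Delta>0$), so that the phrase ``strictly between'' is meaningful and the strict inequalities survive.
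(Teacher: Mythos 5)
Your proposal is correct and follows essentially the same route as the paper's own proof: identify $r_q = \theta_{n+3}$, evaluate the monic quadratic factor $\lambda^2 - \lambda(\theta_1+\theta_{n+2}+\theta_{n+3}) + \theta_1\theta_{n+3}$ at $\lambda = \theta_{n+3}$ to get $-\theta_{n+2}\theta_{n+3} < 0$, and conclude from the upward-opening parabola that $r_q$ lies strictly between the two real roots guaranteed by Lemma 1. The only difference is presentational: you make explicit the (correct) remark that the bracketing roots must come from the quadratic factor rather than the linear factors, which the paper leaves implicit.
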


\begin{proof}
Consider the quadratic factor of $P_n$, $(\theta_1 - \lambda)(\theta_{n+2} + \theta_{n+3} - \lambda) - \theta_{1}\theta_{n+2}$. By Lemma 1, this quadratic polynomial has distinct (since $\Delta > 0$) real, positive roots. Let $p_1, p_2$ denote these roots, ordered by magnitude so that $p_1<p_2$. Since the leading term of the quadratic is positive, the roots of the polynomial divide the real line into 3 regions: $\{\lambda \leq p_1\}$ and $\{\lambda \geq p_2\}$ where the polynomial is greater than or equal to zero, and $\{p_1 < \lambda < p_2\}$ where the  polynomial is strictly negative. By inspection, $r_q = \theta_{n+3}$. Directly evaluating the value of the quadratic polynomial at $\lambda = r_q = \theta_{n+3}$, we find:

\begin{align}
\theta_{n+3}^2 - \theta_{n+3}(\theta_1+\theta_{n+2}+\theta_{n+3}) + \theta_1\theta_{n+3} = -\theta_{n+3}\theta_{n+2} < 0.
\end{align}

Because the value of the quadratic factor is negative at $\lambda = r_q$, $r_q$ must lie between the roots of (\ref{eq:lemma1_2}).

\end{proof}

Finally, we prove a lemma regarding the magnitude of $P_n$ and $Q_1$ at the origin.

\begin{lemma}{$|P_n(0)| > |Q_1(0)|$}
\end{lemma}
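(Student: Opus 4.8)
The plan is to prove the inequality by direct evaluation at the origin, since both $P_n$ and $Q_1$ are already presented in fully factored form in (\ref{eq:Pn}) and the definition of $Q_1$. First I would substitute $\lambda = 0$ into (\ref{eq:Pn}). The key observation is that the quadratic factor collapses cleanly: at $\lambda = 0$ it becomes $\theta_1(\theta_{n+2}+\theta_{n+3}) - \theta_1\theta_{n+2} = \theta_1\theta_{n+3}$, with the cross-term cancelling. The remaining factors simply drop their $-\lambda$ contributions, so that
\begin{align}
P_n(0) = \theta_1\theta_{n+3}(\theta_n + \theta_{n+1})\prod_{i=2}^{n-1}\theta_i.
\label{eq:lemma3_Pn0}
\end{align}
Evaluating $Q_1$ at the origin is immediate:
\begin{align}
Q_1(0) = -\theta_{n+3}\prod_{i=1}^{n}\theta_i = -\theta_{n+3}\,\theta_1\,\theta_n\prod_{i=2}^{n-1}\theta_i.
\label{eq:lemma3_Qn0}
\end{align}

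Next I would take absolute values. Since monotonic kinetics force every elasticity $\theta_i > 0$ (as established in Lemma 1 and the Appendix), both products consist entirely of positive factors, so $|P_n(0)|$ and $|Q_1(0)|$ are just (\ref{eq:lemma3_Pn0}) and the negation of (\ref{eq:lemma3_Qn0}). I would then factor out the common positive quantity $\theta_1\theta_{n+3}\prod_{i=2}^{n-1}\theta_i$ from both sides. Because this factor is strictly positive, dividing through preserves the direction of the inequality, and the claim $|P_n(0)| > |Q_1(0)|$ reduces to the transparent statement $\theta_n + \theta_{n+1} > \theta_n$, that is, $\theta_{n+1} > 0$.

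The final step is to invoke positivity of the elasticities one more time: $\theta_{n+1} > 0$ holds by the same monotonicity assumption, which closes the argument. There is no genuine obstacle here — the lemma is a short computation rather than a deep fact — so the only point requiring care is the correct simplification of the quadratic factor at $\lambda = 0$ and verifying that the cancelled common factor is nonzero, both of which follow from the strict positivity of all $\theta_i$. I would keep the write-up brief and emphasize that the entire inequality hinges on the single ``extra'' elasticity $\theta_{n+1}$ appearing in $P_n(0)$ but not in $|Q_1(0)|$.
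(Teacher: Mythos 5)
Your proposal is correct and follows exactly the paper's own argument: evaluate both polynomials at $\lambda = 0$ (noting the quadratic factor collapses to $\theta_1\theta_{n+3}$), cancel the common positive factor, and reduce the inequality to $\theta_n + \theta_{n+1} > \theta_n$, which holds since $\theta_{n+1} > 0$. The paper states this more tersely but the computation and the key observation (the extra elasticity $\theta_{n+1}$) are identical.
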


\begin{proof}
Explicitly calculating $P_n(0)$, we find $|P_n(0)| =  |(\theta_n + \theta_{n+1}) \theta_{n+3}\prod_{i = 1\hdots n-1}\theta_i | $. This is always greater than $|Q_1(0)| = |\theta_n \theta_{n+3}\prod_{i = 1\hdots n-1}\theta_i|$.

\end{proof}

Now, using Lemmas 1-3, we proceed to prove that the roots of $\chi_n$ must lie in the positive real half of the complex plane. To do so, we will make use of a well-known theorem from complex analysis known as Rouch\'{e}'s Theorem.

\begin{theorem}[Symmetric Rouch\'{e}'s Theorem] 
Two holomorphic functions $f$ and $g$ have the same number of roots within a region bounded by some continuous closed contour $C$ (on which neither $f$ nor $g$ have any poles or zeros) if the strict inequality

\[|f(z) - g(z)| < |g(z)|
\]

holds on $C$ \cite{BrownJamesWardChurchill1996}.

\end{theorem}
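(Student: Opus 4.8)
The plan is to derive this statement from the argument principle, which I will take as the underlying tool. The argument principle states that for a function $h$ holomorphic on and inside $C$ with no zeros on $C$, the number of zeros $N_h$ enclosed by $C$ (counted with multiplicity) equals the winding number of the image curve $h(C)$ about the origin, i.e. $N_h = \frac{1}{2\pi}\Delta_C\arg h(z)$, where $\Delta_C\arg$ denotes the net change in argument as $z$ traverses $C$ once. Since both $f$ and $g$ are holomorphic, proving that they enclose the same number of zeros reduces to showing $\Delta_C\arg f = \Delta_C\arg g$.

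First I would verify that the hypothesis rules out zeros of either function on the contour, so that the argument principle applies to each. On $C$ the reverse triangle inequality gives $|f(z)| \ge |g(z)| - |f(z)-g(z)| > 0$ from the strict bound $|f(z)-g(z)| < |g(z)|$, so $f$ is nonvanishing on $C$; and $g$ is nonvanishing on $C$ as well, since otherwise the right-hand side of the bound would vanish. With $g \neq 0$ on $C$, the ratio $\phi(z) = f(z)/g(z)$ is well defined there, and $\Delta_C\arg\phi = \Delta_C\arg f - \Delta_C\arg g$, so it suffices to show $\Delta_C\arg\phi = 0$.

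The key geometric step is to locate the image $\phi(C)$. Dividing the hypothesis through by $|g(z)|$ yields $|\phi(z) - 1| < 1$ for every $z \in C$, so $\phi(C)$ lies entirely inside the open disk of radius one centered at $1$. This disk does not contain the origin, hence the closed curve $\phi(C)$ cannot encircle $0$ and its winding number about the origin vanishes, giving $\Delta_C\arg\phi = 0$. Combining this with the previous step yields $\Delta_C\arg f = \Delta_C\arg g$, and therefore $N_f = N_g$, which is the claim.

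The main obstacle here is not analytic but bookkeeping: one must be careful that $\phi$ is meromorphic inside $C$, with zeros at the zeros of $f$ and poles at the zeros of $g$, so that the identity $\Delta_C\arg\phi = 2\pi(N_f - N_g)$ correctly counts zeros minus poles, and one must confirm that every argument-change identity used requires only the nonvanishing of $f$ and $g$ on $C$ itself, never in the interior. Once these hypotheses are checked, the containment $\phi(C)\subset\{|w-1|<1\}$ does all of the real work.
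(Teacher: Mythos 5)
Your proof is correct, but there is nothing in the paper to compare it with: the paper does not prove this statement at all. It quotes the result as its Theorem 1, a classical theorem of complex analysis, with a citation to the Brown--Churchill textbook, and then uses it purely as a black box to prove the cycle stability conjecture and the two Stubborn Roots Theorems. Your argument is the standard textbook derivation from the argument principle, and it is sound: the strict inequality forces $g \neq 0$ on $C$ (otherwise its right-hand side would vanish) and, by the reverse triangle inequality, $f \neq 0$ on $C$; the quotient $\phi = f/g$ then satisfies $|\phi - 1| < 1$ on $C$, so the closed curve $\phi(C)$ lies in the disk $\{|w-1|<1\}$, which omits the origin and is therefore traversed with winding number zero about $0$; finally the argument principle --- applied to $f$ and $g$ separately, or equivalently to the meromorphic function $\phi$, whose zeros and poles inside $C$ are exactly the zeros of $f$ and of $g$ respectively --- converts $\Delta_C \arg \phi = 0$ into $N_f = N_g$. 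Two small remarks. First, your derivation shows that the paper's parenthetical hypothesis that neither function vanish on $C$ is redundant, since it already follows from the strict inequality (and the mention of poles is vacuous for holomorphic functions). Second, for the argument principle to apply you need $C$ to be a positively oriented simple closed contour with $f$ and $g$ holomorphic on $C$ and throughout its interior; this is only implicit in the paper's loose phrase ``region bounded by some continuous closed contour,'' so it is worth stating explicitly as a standing assumption in your write-up.
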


In essence, Rouch\'e's theorem offers a way to determine the number of roots of a difference of two functions lying inside a closed contour in the complex plane. We will use $f = S = P_n +Q_1$ and $g = P_n$, By taking contours bounding larger and larger regions of the left half-plane (those complex numbers with negative real-part), we will show that $|Q_1| < |P_n|$ on the contour, and thus prove that $S$ has no roots inside this contour, i.e. no roots in the left half-plane. \\
\\
We let our contour $C_R$ consist of two parts (see Figure 1):
\begin{itemize}
	\item the portion of the circle $\{|\lambda| = R\}$ centered at the origin in the negative real half of the complex plane
	\item the portion of the imaginary axis connecting the two points of intersection of the above circle with the imaginary axis
\end{itemize}

First, we will prove that along an arc of sufficiently large radius, $\frac{|P_n|}{|Q_1|}>1$. Since $n>0$, given an arc of sufficiently large radius $R$, $|Q_1| < |P_n|$ simply because $P_n$ is of higher order than $Q_1$ (\textit{i.e.} the highest-order term in $P_n$ is $\lambda^{n+1}$ which dominates the highest order term in $Q_1$, $\lambda^1$, for very large $\lambda$).

Next, we will prove that along the upper half of the imaginary axis, $\frac{|P_n|}{|Q_1|}>1$.  To do so, let us consider the behavior of $\frac{|P_n|}{|Q_1|}$ by substituting $\lambda = iy, y>0 $ into (\ref{eq:definitions}) and taking the modulus. Denoting the roots of $P_n$ as $r_i$ for $i = 1, \ldots, n+1$ and the root of $Q_1$ as $r_q$, we have

\begin{align}
|P_n(iy)| &= |(r_1-iy)(r_2-iy)\hdots(r_{n+1}-iy)| \nonumber \\
&= \sqrt{(r_1^2+y^2)(r_2^2+y^2)\hdots(r_{n+1}^2+y^2)}
\end{align}

\begin{align}
|Q_1(iy)| &= |c-iby| \nonumber \\
&= \sqrt{c^2+b^2y^2} = b\sqrt{r_q^2+y^2},
\end{align}
where $c = |Q_1(0)|$ and $r_q = \frac{c}{b}$.
We must show that the following condition holds for all $y\geq 0$:
\begin{align}
x(y) = \frac{|P_n|}{|Q_1|} = \frac{\sqrt{(r_1^2+y^2)(r_2^2+y^2)\hdots(r_{n+1}^2+y^2)}}{b\sqrt{r_q^2+y^2}} > 1.
\label{eq:condition1}
\end{align}

Note that at $y = 0$, we know (\ref{eq:condition1}) is satisfied because $|P_n(0)| > |Q_1(0)|$. If we can show that $x(y)$ is a strictly increasing function, then we know that (\ref{eq:condition1}) will be satisfied for all $y$. To do so, let us work with $x^2(y)$. Note that $x(y)>0,x^2(0) > 1$, and if $\frac{d}{dy}x^2 > 0$ for all $y$, then $x^2(y) > 1$ for all $y$. This would then imply that $x(y) > 1$ for all $y \geq 0$. We have

\begin{align}
x^2(y) = \frac{|P_n|^2}{|Q_1|^2} = \frac{(r_1^2+y^2)(r_2^2+y^2)\hdots(r_{n+1}^2+y^2)}{b^2(r_q^2+y^2)}.
\end{align}

Taking a derivative of $x^2$ with respect to $y$ and using the identity $\frac{d}{dy} (f_1f_2) = \frac{df_1}{dy}f_2 +\frac{df_2}{dy}f_1$, where $f_1 = |P_n|^2, f_2 = \frac{1}{|Q_1|^2}$ (the product rule applied to $|P_n|^2$ and $1/|Q_1|^2$), we find 

\begin{align}
b^2 \frac{d}{dy} \left(x^2\right) = \frac{2y}{r_q^2+y^2}\left(\sum_{i = 1\hdots n+1}{\left(\prod_{j\neq i}{(r_j^2+y^2)} \right) }  - \frac{1}{r_q^2+y^2} \prod_{k = 1\hdots n+1} {(r_k^2 + y^2)}\right).
\label{eq:dydl}
\end{align}

Without loss of generality, suppose $r_1$ is the smallest root of $P_n$. From the first term on the right hand side of (\ref{eq:dydl}), select the term corresponding to $i = 1$. Recall that, by Lemma 2, $r_1$ must be smaller than $r_q$. Isolating just this term and the negative term in the parentheses on the right-hand-side of (\ref{eq:dydl}) and summing, we find

\begin{align}
\left( 1 - \frac{r_1^2+y^2}{r_q^2+y^2} \right) \prod_{i = 2\hdots n+1} {r_i^2 + y^2}.
\label{eq:subtract}
\end{align}

Since $r_q > r_1$, (\ref{eq:subtract}) is positive. There are no more negative terms in (\ref{eq:dydl}), proving that $x^2(y)$ is strictly increasing. This proves that $|P_n| > |Q_1|$ on the positive imaginary axis. Furthermore, since real polynomials are symmetric across the real axis, an identical argument shows that $|P_n| > |Q_1|$ on the negative imaginary axis. In particular, setting $\lambda = - i y$ for $y > 0$ yields the exact same expressions for $|P_n|$ and $|Q_1|$.

We have satisfied all of the assumptions of Rouch\'e's Theorem, and have proven $\chi_n(\lambda)$ contains no roots in the left half of the complex plane. Therefore, the nonautocatalytic metabolic cycle always has stable equilibria.

\subsection*{The Stubborn Roots Theorem}
Can we use the methods illustrated in the prior section to locate the roots of the sum of two more general polynomials? Our motivation for studying this problem derives from control theory, where it is common to ask whether the roots of a polynomial lie in one half of the complex plane \cite{Astrom2008}. Such polynomials frequently correspond to the characteristic equation of the Jacobian matrix of a dynamical system. Although we do not provide a generic method for predicting whether a matrix's characteristic equation may be written as the sum of two simpler polynomials, the appearance of such structure in our studies of a metabolic cycle suggests that related, ``well-ordered'' systems may exhibit similar properties.

The main question we ask in this section is under what conditions may the roots of a polynomial $P$ be ``stubborn'' when $P$ is summed with another polynomial $Q$: in such a case, the roots of the summed polynomial $S = P+Q$ remain qualitatively identical to those of $P$. By qualitatively identical, we mean specifically that the number of roots of $P$ in the left (right) half of the complex plane is equal to the number of roots of $S$ in the left (right) half of the complex plane. This question follows in the spirit of similar work by Anderson \cite{Anderson1993}. Our primary result is a theorem, which we call the Stubborn Roots Theorem, which gives sufficient conditions under which the location of the roots of a sum of polynomials $S = P + Q$ remains qualitatively unchanged from $P$. 

\begin{theorem}[Stubborn Real Roots Theorem] 
Let $P_n$ and $Q_m$ be polynomials of order $n$ and $m$, respectively, and let $ n>m$. Assume that all the roots of $P_n$ and $Q_m$ are purely real, and that $|P_n(0)| > |Q_m(0)|$. Denote by $p_i$ and $q_i$ the roots of $P_n$ and $Q_m$ ordered by magnitude, so that $|p_1|<|p_2|<...<|p_n|$ and $|q_1|<|q_2|<\hdots<|q_m|$. If for every $j = 1...m$, $|p_j| < |q_j|$, then the number of roots of $S = P_n + Q_m$ located in the negative (positive) real half of the complex plane is equal to the number of roots of $P_n$ in the negative (positive) real half of the complex plane.
\end{theorem}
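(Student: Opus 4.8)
The plan is to reproduce the Rouch\'e argument used for the cycle stability conjecture, recasting only the monotonicity step so that it survives the passage from $m=1$ to arbitrary $m<n$. Writing the two polynomials in factored form $P_n(\lambda)=a\prod_{i=1}^{n}(\lambda-p_i)$ and $Q_m(\lambda)=b\prod_{j=1}^{m}(\lambda-q_j)$, I would evaluate both moduli on the imaginary axis $\lambda=iy$. Because every root is real, $|P_n(iy)|^2=a^2\prod_{i=1}^n(p_i^2+y^2)$ and $|Q_m(iy)|^2=b^2\prod_{j=1}^m(q_j^2+y^2)$, so the quantity to control is the ratio
\[
R(y)=\frac{|P_n(iy)|^2}{|Q_m(iy)|^2}=\frac{a^2\prod_{i=1}^n(p_i^2+y^2)}{b^2\prod_{j=1}^m(q_j^2+y^2)},
\]
and the goal on the imaginary axis is exactly $R(y)>1$ for all $y$. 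The hypothesis $|P_n(0)|>|Q_m(0)|$ gives $R(0)>1$ and, since $P_n(0)\neq 0$, guarantees that no root of $P_n$ sits on the imaginary axis.

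The crux is to show $R$ is increasing on $[0,\infty)$. Rather than differentiate the products directly (which, once $m>1$, produces several competing negative terms and obscures the sign), I would pass to the logarithmic derivative, where the ordering hypothesis enters transparently:
\[
\frac{R'(y)}{R(y)}=2y\left(\sum_{i=1}^{n}\frac{1}{p_i^2+y^2}-\sum_{j=1}^{m}\frac{1}{q_j^2+y^2}\right).
\]
For each $j=1,\dots,m$ the assumption $|p_j|<|q_j|$ gives $p_j^2<q_j^2$, hence $\frac{1}{p_j^2+y^2}>\frac{1}{q_j^2+y^2}$ strictly. Summing these $m$ strict inequalities already forces the difference of the two paired sums to be positive, and the surplus $n-m$ terms $\frac{1}{p_i^2+y^2}$ (for $i>m$) only reinforce this; so the parenthesis is strictly positive for $y>0$. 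Thus $R'(y)>0$ on $(0,\infty)$, which together with $R(0)>1$ yields $R(y)>1$ for all $y\geq 0$. Since $R$ depends only on $y^2$, the same bound holds for $y\leq 0$, so $|Q_m|<|P_n|$ on the entire imaginary axis.

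The remainder is bookkeeping, exactly as in the cycle proof. On an arc $\{|\lambda|=R\}$ of sufficiently large radius, $|P_n|>|Q_m|$ because $\deg P_n=n>m=\deg Q_m$, so the strict inequality $|S-P_n|=|Q_m|<|P_n|$ holds on the whole contour $C_R$ bounding the left half-plane, and neither $P_n$ nor $S$ vanishes there, as required by the Symmetric Rouch\'e Theorem. Taking $f=S$ and $g=P_n$ and letting $R\to\infty$, Rouch\'e's theorem gives that $S$ and $P_n$ have the same number of zeros in the left half-plane; the mirror-image contour yields the same statement for the right half-plane. Finally, because all roots of $P_n$ are real, its zeros in the left (right) half-plane are precisely its negative (positive) real roots, which translates the count directly into the statement of the theorem; the assumption $n>m$ ensures $\deg S=n$, so that all $n$ roots of $S$ are accounted for and none escapes to infinity.

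I expect the main obstacle to be precisely this monotonicity step: the direct product-rule computation that sufficed for $m=1$ does not obviously retain a definite sign once $Q_m$ contributes several reciprocal terms, and recognizing the logarithmic derivative as the right object—so that the pairing $|p_j|<|q_j|$ can be applied factor by factor while the surplus $n-m$ roots of $P_n$ absorb any slack—is what makes the general case go through. A secondary point requiring care is confirming that the contour meets no zeros of $P_n$ or $S$, which follows from the strict inequality $|Q_m|<|P_n|$ on the imaginary axis and from placing $R$ beyond the largest real root of $P_n$ on the arc.
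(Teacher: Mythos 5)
Your proposal is correct and follows essentially the same route as the paper's proof: the same decomposition $f=S$, $g=P_n$, the same semicircular contour exhausting the left half-plane, degree-domination on the arc, and monotonicity of $|P_n|^2/|Q_m|^2$ on the imaginary axis driven by the pairwise hypothesis $|p_j|<|q_j|$. Your use of the logarithmic derivative is merely a cleaner packaging of the paper's product/quotient-rule computation and term-by-term pairing, so no substantive difference remains.
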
 

The proof of Theorem 2 is provided in the Appendix, and follows precisely the same line of reasoning as the proof of the cycle stability conjecture in the previous section. The theorem relies on two critical assumptions relating $P_n$ and $Q_m$. First, at the origin, $|P_n(0)| > |Q_m(0)|$. Second, it must be possible to assign to each root of $q_i$ of $Q_m$ a unique root $p_i$ of $P_n$ such that $q_i > p_i$. 

The power of Theorem 2 is that it enables one to qualitatively locate the roots of a polynomial $S$ simply by inspecting the roots of its summands $P_n$ and $Q_m$. If the roots of $P_n$ and $Q_m$ are easily calculated (as in the case of the cycle stability conjecture in the prior section), then the roots of $S$ can be immediately located without resorting to difficult calculations. In many ways, Theorem 2 is reminiscent of the work reported in \cite{Fisk2006}. There, Fisk describes the behavior of the roots of sums of polynomials which ``interlace.'' For two polynomials $P$ and $Q$ to interlace, the roots of $P$ and $Q$ alternate when ordered from most negative to most positive, so that $p_1<q_1<p_2<q_2 \hdots $. Notably, our result here is more general, and includes interlacing as a special case.

\subsection*{Stubborn Complex Roots}
What happens when matters become complex? In this section, we generalize the Stubborn Roots Theorem to cases when the roots of $P$ are not necessarily all real.  This is often the case in dynamical systems, where complex roots indicate oscillatory phenomena such as spiraling or limit cycles \cite{Strogatz1994}. Proceeding along the same lines as before, we find that the roots of $P$ remain stubborn to the addition of $Q$ as long as they remain predominantly real. That is, if the real component of the complex roots of $P$ is larger than their imaginary component, then a more general version of the Stubborn Roots Theorem holds.

\begin{theorem}[Stubborn Complex Roots Theorem] 
Let $P_n$ and $Q_m$ be polynomials of order $n$ and $m$, respectively, $n > m$. Let the $m$ roots of $Q_m$ be positive and purely real. Further, let $P_n$ have at least $m$ real roots, and let the remainder of the roots be either real or complex. Assume that $|P_n(0)| > |Q_m(0)|$. Furthermore, assume that for each complex root of $P$, $p_k$, the magnitude of the real component $|Re(p_k)|$ is larger than the magnitude of its imaginary component, $|Im(p_k)|$. Denote by $p_i$ and $q_i$ the real roots of $P_n$ and $Q_m$ ordered by magnitude, so that $|p_1|<|p_2|<...<|p_m|$ and $|q_1|<|q_2|<\hdots<|q_m|$. If for every $j = 1...m$, $|p_j| < |q_j|$, then the number of roots of $S = P_n + Q_m$ located in the negative (positive) real half of the complex plane is equal to the number of roots of $P_n$ in the negative (positive) real half of the complex plane (Figure 2).
\end{theorem}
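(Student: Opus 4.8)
The plan is to mimic the Rouch\'e-based argument from the cycle stability proof verbatim, but to handle the new wrinkle that $P_n$ may carry complex roots. Concretely, I would apply the Symmetric Rouch\'e Theorem with $f = S = P_n + Q_m$ and $g = P_n$ on the same family of contours $C_R$, namely the semicircular arc of radius $R$ in the left half-plane together with the segment of the imaginary axis it subtends. As before, on the arc the inequality $|Q_m| < |P_n|$ is automatic for $R$ large, since $\deg P_n = n > m = \deg Q_m$. So the entire content is to establish $|P_n(iy)| > |Q_m(iy)|$ for all real $y$, which by the even symmetry of both moduli in $y$ reduces to $y \ge 0$. If this holds, Rouch\'e gives that $S$ and $P_n$ have the same number of roots inside $C_R$ for every large $R$, hence the same number in the open left half-plane; the assumption that no roots sit on the imaginary axis (guaranteed by the strict inequality there) lets us conclude the right half-plane counts agree as well.

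The key step is therefore the inequality on the imaginary axis. I would again factor $P_n(iy)$ over its roots and group them into the $m$ smallest real roots $p_1,\dots,p_m$, the remaining real roots, and the complex-conjugate pairs. For the real roots, $|p_k - iy|^2 = p_k^2 + y^2$ exactly as before, and the hypothesis $|p_j| < |q_j|$ for $j = 1,\dots,m$ lets me pair each factor $q_j^2 + y^2$ in $|Q_m(iy)|^2$ against the strictly larger factor $p_j^2 + y^2$ in $|P_n(iy)|^2$. The plan is to write
\begin{align}
\frac{|P_n(iy)|^2}{|Q_m(iy)|^2}
= \left(\prod_{j=1}^{m} \frac{p_j^2 + y^2}{q_j^2 + y^2}\right)
\cdot \frac{b^2}{c^2}\,
\Big(\text{remaining real factors}\Big)
\cdot \Big(\text{complex-pair factors}\Big),
\end{align}
where $b,c$ absorb the leading coefficients as in the real-roots proof. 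Each ratio $\frac{p_j^2+y^2}{q_j^2+y^2}$ is a decreasing function of $|q_j|$, so it is bounded below by its value where it would fail; I would instead track the product directly and combine it with the normalization $|P_n(0)| > |Q_m(0)|$ to anchor the inequality at $y = 0$, then argue monotonicity of the full ratio in $y$ exactly as in equations~(\ref{eq:dydl})--(\ref{eq:subtract}).

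The genuinely new ingredient, and the step I expect to be the main obstacle, is controlling the contribution of each complex-conjugate pair $p_k = \sigma \pm i\tau$. Here $|(p_k - iy)(\bar p_k - iy)| = |(\sigma + i\tau - iy)(\sigma - i\tau - iy)|$, whose squared modulus is $(\sigma^2 + (y-\tau)^2)(\sigma^2 + (y+\tau)^2)$. Unlike a real factor, this is not monotonically increasing in $y$ for all parameter choices: if $\tau$ is large relative to $\sigma$ it develops a dip near $y = \tau$ that could let the ratio drop below one and destroy the Rouch\'e inequality. The hypothesis $|\mathrm{Re}(p_k)| > |\mathrm{Im}(p_k)|$, i.e. $\sigma^2 > \tau^2$, is precisely what I would exploit to rule this out: I would show that under $\sigma^2 > \tau^2$ the paired factor $(\sigma^2 + (y-\tau)^2)(\sigma^2+(y+\tau)^2)$ is itself nondecreasing in $y \ge 0$ (equivalently, that its $y$-derivative, a polynomial in $y^2$ with coefficients involving $\sigma^2 - \tau^2$, is nonnegative). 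Establishing that monotonicity is the crux; once each complex pair and each real factor is individually nondecreasing while dominating its $Q_m$ counterpart at $y = 0$, the product ratio is nondecreasing, stays above $1$, and the Rouch\'e argument closes exactly as in the previous section.
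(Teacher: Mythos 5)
Your proposal is correct, and its skeleton matches the paper's proof: Rouch\'e's theorem with $f = S$ and $g = P_n$ on semicircular contours exhausting the left half-plane, degree domination on the arc, and on the imaginary axis a monotonicity argument for $|P_n(iy)|^2/|Q_m(iy)|^2$ anchored at $y = 0$ by the hypothesis $|P_n(0)| > |Q_m(0)|$. Your crux computation --- that the conjugate-pair factor $\left(\sigma^2+(y-\tau)^2\right)\left(\sigma^2+(y+\tau)^2\right)$ has $y$-derivative $4y\left(y^2+\sigma^2-\tau^2\right)$, hence is nondecreasing on $y \geq 0$ exactly when $|\sigma| \geq |\tau|$ --- is literally the paper's calculation in (\ref{eq:complex2}). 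Where you genuinely differ is in how the monotonicity of the full ratio is organized, and your version buys more. The paper proves only the special case of a linear $Q$ and a single conjugate pair, by differentiating the entire ratio and splitting the derivative into the ``real-roots'' sum plus the conjugate-pair term, then asserts that the general statement follows via ``wrenching and tedious algebraic calculations.'' You instead factor the ratio into individually monotone pieces: each paired ratio $\frac{p_j^2+y^2}{q_j^2+y^2}$ is nondecreasing for $y \geq 0$ because its derivative is $\frac{2y(q_j^2-p_j^2)}{(q_j^2+y^2)^2} \geq 0$ when $|p_j| < |q_j|$, each leftover real factor $p_k^2+y^2$ is trivially nondecreasing, and each conjugate-pair factor is nondecreasing by the hypothesis $|\mathrm{Re}(p_k)| > |\mathrm{Im}(p_k)|$; since a product of nonnegative nondecreasing functions is nondecreasing, the whole ratio never drops below its value at $y = 0$, which exceeds $1$. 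This factorization delivers the theorem at full generality (arbitrary $m$, arbitrarily many complex pairs) in one stroke, which is precisely the step the paper waves off. One slip to fix: your closing sentence asks each factor to ``dominate its $Q_m$ counterpart at $y = 0$,'' which is false --- each paired ratio is strictly less than $1$ at $y = 0$ precisely because $|p_j| < |q_j|$; the correct anchor is the global hypothesis $|P_n(0)| > |Q_m(0)|$, as you stated in your second paragraph, so the argument stands once that sentence is repaired.
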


\begin{proof}
We prove the theorem for the case when $P_n$ has $n-2$ real roots, two complex conjugate roots $p_c^+ = a+bi$ and $p_c^- = a-bi$, and $Q$ has one positive real root. The result can be straightforwardly (via wrenching and tedious algebraic calculations) extended to the generic case in Theorem 3 using an identical argument. As before, we apply Rouch\'{e}'s Theorem using a half circle in the negative real half of the complex plane using $f = P = P_n +Q_1$ and $g = P_n$. First, we consider the behavior of the two polynomials on the large arc in the negative real half of the complex plane. As before, $|P_n|$ dominates $|Q_1|$ as the radius of the arc grows larger. 

Turning our attention to the behavior of the polynomials on the positive imaginary axis, we substitute $z = iy$ to find

\begin{eqnarray}
b^2\frac{|P_n|^2}{|Q_1|^2} = \frac{\prod_{i=1}^{n-2}(y^2+p_i^2)((y-b)^2+a^2)((y+b)^2+a^2)}{y^2+q_1^2} ,
\end{eqnarray}
where for simplicity of notation, we have assumed $P_n$ has leading coefficient $1$.
Differentiating and simplifying algebraically, we obtain the expression
\begin{align}
(y^2+q_1^2)b^2 \frac{d}{dy} \left(\frac{|P_n|^2}{|Q_1|^2} \right) &= C \left( \sum_{i=1}^{n-2}2y \left( \prod_{j=1,j\neq i}^{n-2} p_j^2+y^2 \right) -\frac{2y}{q_1^2+y^2}\left( \prod_{j=1}^{n-2} p_j^2+y^2 \right) \right) \nonumber \\
& + \left(2(y-b)((y+b)^2+a^2) + 2(y+b)((y-b)^2+a^2) \right)\prod_{i=1}^{n-2}(p_i^2+y^2) ,
\label{eq:complex1}
\end{align}

where $C = ((y-b)^2+a^2)((y+b)^2+a^2) $. Using an argument identical to the one used to prove the cycle stability conjecture, it is clear that the top term in (\ref{eq:complex1}) is positive. We are then left to ensure that the bottom term is positive. Expanding this term, we find
\begin{align}
& \left(2(y-b)((y+b)^2+a^2) + 2(y+b)((y-b)^2+a^2) \right)\prod_{i=1}^{n-2}(p_i^2+y^2)  \nonumber \\
&= \left(4y(y^2+a^2+b^2)+2b(-4by) \right) \prod_{i=1}^{n-2}(p_i^2+y^2)  \nonumber \\
&= \left( 4y(y^2+a^2-b^2) \right) \prod_{i=1}^{n-2}(p_i^2+y^2).
\label{eq:complex2}
\end{align}

Thus, (\ref{eq:complex2}) is certain to be positive as long as $|a|>|b|$. In this case, we can once more apply Rouch\'{e}'s theorem to show that $S = P_n+Q_1$ has the same number of roots in each half of the complex plane as $P_n$.

\end{proof}

\section*{Discussion}
A major challenge in systems biology is efficiently studying biological networks through detailed atlases of their topological structures. These atlases, assembled from the cumulative results of many high-throughput, large-scale experiments, capture many of the physical links which underlie such networks. However, they fail to describe most of the detailed dynamics taking place on the network itself. Here, we have studied the stability properties of a generic type of metabolic network, and analytically proven that under quite mild assumptions on the reaction kinetics, any steady-state of the network must be stable. To prove this, we re-formulated the question of stability as a problem of locating the roots of a sum of two polynomials whose roots were easily calculated. This reformulation exposed a more fundamental problem of locating the roots of the sum of two polynomials, and we proved two new results (the Stubborn Roots Theorems) offering sufficient conditions under which the roots of the sum are not qualitatively different from the roots of one of the summands.

The study of metabolic networks and their stability has played an important role in research into the origin of life. Many of the earliest papers studying simple models of primordial metabolic networks focused on elucidating their stability properties, hypothesizing that molecular self-organization may arisen through self-sustaining metabolic cycles \cite{Eigen1978}. Because early metabolism almost certainly lacked the complex regulatory mechanisms and circuits which appear in cells today, these investigations suggested that stability of these cycles to fluctuations in environmental conditions was necessary for their survival. Pursuing this line of thought, Piedrafita \textit{et al} recently proposed a simple chemical reaction network composed of interlocking cycles which could establish and maintain a stable steady-state, even in the face of a sudden loss of some constituent metabolite of the cycle itself \cite{Piedrafita2010}. In response to such a catastrophe, the remaining metabolites re-produced the missing metabolite. This notion of \textit{closure}, in which all of the metabolites necessary for sustaining the system can be produced by the metabolic network itself, also plays an important role in Chemical Organization Theory, a method distinct from SKM for studying chemical reaction networks based on structure and mentioned earlier \cite{Dittrich2007,Kreyssig2012}.

Naturally, one may ask: how important is the stability of equilibria to the robust function of \textit{present-day} biological networks? A rich and diverse literature, dating back nearly half a century and still expanding today, describes the importance of stabilizing structures in ecological networks \cite{May1973, Gross2009}. However, the extent to which stability plays a role in the fitness of metabolic systems is unclear, and studies investigating this question have failed to produce a definitive conclusion. In particular, if stability endowed metabolic networks with some evolutionary advantage, one should expect an enrichment for stabilizing features and structures in contemporary metabolic networks. While some studies have demonstrated the importance of some key stabilizing edges in metabolic networks, such as the allosteric feedback of ATP onto phosphofructokinase in glycolysis \cite{Steuer2006,Gehrmann2011}, others have failed to identify an enrichment of stabilizing structures in metabolism as a whole \cite{vanNes2009}. In fact, synthetic biologists routinely exploit \textit{instability} in order to generate circuits exhibiting sustained oscillations, both in metabolic \cite{Fung2005} and transcriptional \cite{Elowitz2000} systems. Importantly, we note that the work presented here only studied dynamics near equilibrium points, and ignored nonlocal dynamics (such as the appearance of periodic orbits arising from global bifurcations). Efforts extending generalized modeling and SKM to understanding nonlocal dynamics, are now appearing in the literature \cite{Kuehn2013}. 

We expect that the results presented here may find useful application in several challenges facing contemporary biology. First, SKM and generalized modeling could be used as coarse-grained techniques for vetting synthetic circuit designs for their potential to exhibit desirable behaviors (such as robust stability). In a prior study \cite{Reznik2013}, we did precisely this, using generalized modeling to identify which topological circuit designs were entirely incapable of oscillations, irrespective of the choice of kinetic parameters or rate laws. Second, a great deal of interest now exists in using high-throughput metabolomics and fluxomics data to identify the role that small-molecule (\textit{e.g.} allosteric) regulation of metabolic enzymes plays in shaping metabolic dynamics \cite{Link2013}. Given the difficulty in accurately measuring kinetic rate constants \textit{in vivo}, we envision that SKM (and our results here highlighting the inherent stability of certain topological motifs) might serve as a useful bridge between detailed mechanistic models and experimental data, highlighting those regulatory interactions which are crucial to the robust function of the network as a whole.

Finally, our results (in particular, the Stubborn Roots Theorems) illustrate the potential for biological questions to reveal interesting and unsolved problems in other fields. What appeared to us initially as a simple problem of locating the roots of the sum of two stable polynomials, quickly blossomed into the exploration of widely diverse fields of active research, from control theory to matrix analysis. There is now a growing number of examples of similar feedback from biology to other fields, from classic results in evolutionary optimization \cite{Fogel2006} to to the design of novel algorithms \cite{Afek2011}. Interestingly, many of these cross-fertilizations of ideas have taken place because of an abundance of biological data, and a need for analytical tools to understand it. Here, it has been quite the contrary: our study of a topological model of a metabolic cycle was motivated by a dearth of data on the kinetics of metabolic reactions. Nevertheless, in both cases, the ultimate outcome is deeper understanding, relevant to both biology and the fields from which it draws new tools and ideas.

\section*{Appendix}
\subsection*{Structural Kinetic Modeling}
To study the stability of a steady-state of a metabolic network, we employ a technique known as structural kinetic modeling (SKM) \cite{Steuer2006}.  SKM is a non-dimensionalization procedure which replaces conventional kinetic parameters (such as $V_{max}$ and $K_M$) with normalized parameters known as elasticities. In the past, SKM (and its generalization, known as Generalized Modeling (GM)) has been paired with complementary methods studying other dynamic features of a system, such as the effects of noise \cite{Gehrmann2011}.

As illustrated below, elasticities have several properties which make them powerful tools for studying metabolic dynamics. First, in contrast with kinetic parameters (whose values may be uncertain over many orders of magnitude), elasticities are constrained to lie in well-defined ranges (for example, between zero and one), and sampling elasticities across this range effectively captures all possible values of kinetic parameters. Second, the value of an elasticity \textit{does not depend} on the particular choice of kinetic rate law. Instead, an elasticity is simply a normalized measure of the sensitivity of a rate law to infinitesimal changes in a metabolite's concentration. 

To study the stability of a steady state, SKM calculates the Jacobian matrix $J$ of the dynamical system corresponding to a metabolic network. If we let $\mathbf{C}$ be the $m$-dimensional vector of metabolite concentrations, $\mathbf{N}$ be the $m \times r$ stoichiometric network, and $\mathbf{v}$ be the $r$-dimensional vector of metabolic fluxes, then the dynamics of a metabolic network are governed by the system of differential equations 

\begin{align}
\frac{d\mathbf{C}}{dt} = \mathbf{Nv}(\mathbf{C,k}),
\end{align}

\noindent where $\mathbf{k}$ is a vector of parameters and $\mathbf{v}(\mathbf{C,k})$ indicates that the vector of fluxes is dependent on both metabolite concentrations and kinetic parameters. Assuming that a non-zero steady-state $\mathbf{C^0}$ exists, we can make a change of variables and write

\begin{align}
x_i &= \frac{C_i}{C_i^0} \hspace{1cm} \Lambda_{ij} = N_{ij}\frac{v_j(\mathbf{C^0},\mathbf{k})}{C_i^0} \hspace{1cm} \mu_j(\mathbf{x}) = \frac{v_j(\mathbf{C},\mathbf{k})}{v_j(\mathbf{C^0},\mathbf{k})},
\label{eq:SKM1}
\end{align}

\noindent where $i = 1 \hdots m$ and $j=1 \hdots r$. 

Then, we can write the Jacobian as 
\begin{align}
\mathbf{J} = \mathbf{\Lambda} \frac{\partial \mu_j(\mathbf{C})}{\partial C_i}  = \mathbf{\Lambda \Theta}.
\label{eq:SKM2}
\end{align}

\noindent The stability of the steady-state $\mathbf{C^0}$ is then dependent on the eigenvalues of $\mathbf{J}$. If the real component of all eigenvalues of $\mathbf{J}$ are negative, then the steady-state is stable. Thus, the problem of stability reduces to finding the eigenvalues of $\mathbf{J}$.

The element which encodes the effective kinetic dependence of reaction rates on metabolites of SKM is the $r \times m $ \textit{elasticity matrix} $\mathbf{\Theta}$. The $(i,j)^{th}$ element of $\mathbf{\Theta}$ describes the sensitivity of the normalized rate of reaction $i$ to the normalized concentration of metabolite $j$. This corresponds precisely to the \textit{effective kinetic order} of the $i^{th}$ reaction with respect to the $j^{th}$ substrate: if the rate of reaction is linear with the amount of substrate, then $\theta = 1$, while if it is zeroth order, $\theta = 0$ \cite{Fell1992}. Importantly, we assume that all elasticities in the metabolic cycle are greater than zero; that is, that an increase in the substrate of any reaction will increase the rate of that reaction. The analytical power of SKM comes precisely from the constrained and well-defined ranges of each element of $\mathbf{\Theta}$. 

To illustrate the utility of elasticities, we derive below the elasticity of a metabolite involved in a Michaelis-Menten reaction. Consider a biochemical reaction governed by the rate law

\begin{align}
v = \frac{-V_{max}S}{K_M+S}.
\label{eq:MM1}
\end{align}

Assuming that this reaction is embedded within a reaction network where metabolite $S$ is at equilibrium concentration $S_0$, we can calculate the normalized reaction rate $\mu$ by normalizing (\ref{eq:MM1}) by its steady-state reaction rate:

\begin{align}
\mu &= \frac{ \frac{-V_{max}S}{K_M+S} } {\frac{-V_{max}S_0}{K_M+S_0} } \nonumber \\
&= \frac{S(K_M+S_0)}{S_0(K_M+S)} \nonumber \\
& = \frac{x(K_M+S_0)}{K_M+xS_0},
\label{eq:MM2}
\end{align}

where $x = S/S_0$ is the normalized concentration of $S$. Then, the elasticity is
\begin{align}
\theta = \frac{ \partial \mu}{\partial x }\bigg|_{x=1} = \frac{1}{1+x}\bigg|_{x=1} = \frac{K_M}{K_M+S_0}.
\label{eq:MM3}
\end{align}

Notice that since $S_0 > 0$, $\theta$ is constrained to the range $(0,1)$. The outcome of applying SKM to an entire metabolic network is a Jacobian, whose elements are formulated in terms of elasticities with well-defined ranges. Prior studies have used computational surveys \cite{Steuer2006, Grimbs2007} as well as analytical work \cite{Reznik2010} to study the role that particular key elasticities play in determining the stability of the network. 

Finally, it may be useful to give a bit more intuition regarding the generality of an elasticity. To do so, we consider a reaction, governed by Michaelis-Menten kinetics, which exhibits an elasticity (explicitly calculated in Equation (\ref{eq:MM3}) above) of 0.5. First, note that this elasticity may correspond to any combination of $S_0$ and $K_M$ which satisfy $\frac{K_M}{K_M+S_0} = 0.5$, for example $K_M = S_0 = 1$ or $K_M = S_0 = 2$. Thus, a single value for an elasticity in fact corresponds to a large locus of steady-state concentrations $S_0$. Now, notice that if we consider all elasticities in the range $(0,1)$, we in fact capture all possible combinations of $S_0>0$ and $K_M>0$! Thus, if we prove a theorem using this elasticity, and this theorem holds for all values of the elasticity in the range $(0,1)$, then it similarly holds for all possible choices of $S_0$ and $K_M$. Obviously, the result also holds for any other kinetic rate laws for which this elasticity is valid. Thus, we effectively capture the entire space of possible parameters and steady-state concentrations. This is precisely the approach taken in proving the cycle stability conjecture.

\subsection*{Characteristic Polynomial for the General Metabolic Cycle}
\label{sec:appendix_jacobian}

Here, we formulate the structural kinetic model for the metabolic cycle depicted in (\ref{eq:reactions}) and reproduced below. 

\begin{align}
\emptyset &\longrightarrow M_{1} \nonumber \\
M_{1} + O_{1} &\longrightarrow M_{2} + O_{2} \nonumber \\
M_{i} &\longrightarrow M_{i+1}, \hspace{2mm}  i = 2\hdots n-1 \nonumber \\
M_{n} &\longrightarrow M_{1} \nonumber \\
M_{n} &\longrightarrow \emptyset \nonumber \\
O_2 &\xrightarrow{energy} O_1 
\label{eq:reactions_appendix}
\end{align}

The analysis presented below is identical to that presented in \cite{Reznik2010}. The stoichiometric matrix $\mathbf{S}$ of the metabolic cycle is

\begin{equation}
\mathbf{S} = \left[
\begin{array}{ccccccccc}
1 & -1 & 0  & \cdots & 0 & 0 & 1 & 0\\
\vspace{1pt} \\
0 & 1 & -1 & \cdots & 0 & 0 & 0 & 0\\
\vspace{1pt} \\
0 & 0 & 1 & \cdots & 0 & 0 & 0 & 0\\
\vspace{1pt} \\
\vdots & \vdots & \vdots & \vdots & \vdots & \vdots & \vdots & \vdots \\
\vspace{1pt}\\
0 & 0 & 0 & \cdots & -1 & 0 & 0 & 0 \\
\vspace{1pt}
0 & 0 & 0 & \cdots & 1 & -1 & -1 & 0 \\
\vspace{1pt}
0 & -1 & 0 & \cdots & 0 & 0 & 0 & 1 \\
0 & 1 & 0 & \cdots & 0 & 0 & 0 & -1
\end{array}
\right],
\label{eq:appendixS}
\end{equation}

\noindent where the rows correspond to each metabolite in the system. The generalized forms of $\mathbf{\Lambda}$, the normalized stoichiometric matrix, and $\mathbf{\Theta}$, the elasticity matrix, are shown below for the system depicted in (\ref{eq:reactions}). Note that we make no assumptions on the steady state concentrations of metabolites, denoted by the vector $(M_1, M_2,\dots,M_N,O_1,O_2)$. Furthermore, the steady-state flux through the cycle is equal to a generic magnitude $v, v>0$, except for the input and outflow reactions (with flux $\alpha v$) and the reaction from $M_n$ to $M_1$ (with flux $(1-\alpha)v$). 

\begin{equation}
\mathbf{\Lambda} = \left[
\begin{array}{ccccccccc}
\frac{\alpha v}{M_1} & \frac{-v}{M_1} & 0  & \cdots & 0 & 0 & \frac{(1-\alpha)}{M_1} & 0\\
\vspace{1pt} \\
0 & \frac{v}{M_2} & \frac{-v}{M_2} & \cdots & 0 & 0 & 0 & 0\\
\vspace{1pt} \\
0 & 0 & \frac{v}{M_3} & \cdots & 0 & 0 & 0 & 0\\
\vspace{1pt} \\
\vdots & \vdots & \vdots & \vdots & \vdots & \vdots & \vdots & \vdots \\
\vspace{1pt}\\
0 & 0 & 0 & \cdots & \frac{-v}{M_{N-1}} & 0 & 0 & 0 \\
\vspace{1pt}
0 & 0 & 0 & \cdots & \frac{v}{M_{N}} & \frac{-\alpha v}{M_N} & -\frac{(1-\alpha)v}{M_N} & 0 \\
\vspace{1pt}
0 & \frac{-v}{O_1} & 0 & \cdots & 0 & 0 & 0 & \frac{v}{O_1}
\end{array}
\right]
\label{eq:appendixlambda}
\end{equation}

and
\begin{equation}
\mathbf{\tilde{\Theta}} = \left[
\begin{array}{cccccc}
0 & 0 & 0 & \cdots & 0 & 0 \\
\tilde{\theta}_1 & 0 & 0 & \cdots & 0 & \tilde{\theta}_{N+2} \\
0 & \tilde{\theta}_2 & 0 & \cdots & 0 & 0 \\
0 & 0 & \tilde{\theta}_3 & \cdots & 0 & 0 \\
0 & 0 & 0 & \cdots & \tilde{\theta}_N & 0\\
0 & 0 & 0 & \cdots & \tilde{\theta}_{N+1} & 0\\
0 & 0 & 0 & \cdots & 0 & -\frac{\tilde{\theta}_{N+3}O_1}{O_2}
\end{array}
\right].
\label{eq:appendixtheta}
\end{equation}
\noindent With $N+1$ metabolites and $N+3$ reactions, $\mathbf{\Lambda}$ is $N+1 \times (N+3)$ and $\mathbf{\Theta}$ is $(N+3) \times N+1$. Note that the last row (corresponding to cofactor $O_2$) of $\mathbf{\Lambda}$ is omitted because the cofactors come as a conserved pair, and the bottom right element of $\mathbf{\tilde{\Theta}}$ (corresponding to the dependence of the last reaction on $O_2$) is replaced with a negative element in order to account for this conservation (for more information on modeling of conserved moeties, see the SI Text of \cite{Steuer2006}). Then, the Jacobian $J_n = \mathbf{\Lambda \tilde{\Theta}}$. We elect to study the eigenvalues of the negative of $J_n$, which we call $J_n^-$. Note that the eigenvalues of $J_n^-$ are precisely the negative of the eigenvalues of $J_n$. Thus, proving that all the eigenvalues of $J_n^-$ have positive real part is equivalent to proving that all of the eigenvalues of $J_n$ have negative real part. The characteristic polynomial of $J_n^-$, $\chi_{n}(\lambda)$ is

\[
\chi_{n}(\lambda) =
\begin{vmatrix}
\frac{\theta_{1}}{M_1} - \lambda & 0 & ... & 0 & \frac{-\theta_{n}}{M_1} & \frac{\theta_{n+2}}{M_1} \\
\frac{-\theta_{1}}{M_2} & \frac{\theta_{2}}{M_2} - \lambda & ... & 0 & 0 & \frac{-\theta_{n+2}}{M_2} \\
0 & \frac{-\theta_{2}}{M_3} & ... & ... & ... & ... \\
... & ... & ... & \frac{\theta_{n-1}}{M_{n-1}} - \lambda & 0 & 0 \\
0 & 0 & ... & \frac{-\theta_{n-1}}{M_n} & \frac{\theta_{n} + \theta_{n+1}}{M_n} - \lambda & 0 \\
\frac{\theta_{1}}{O_1} & 0 & ... & 0 & 0 & \frac{\theta_{n+2} + \theta_{n+3}}{O_1} - \lambda
\end{vmatrix}
\].

Above, we have made a change of variables so that $\theta_i = \tilde{\theta}_i, i = 2...n+2$, $\theta_1 = \alpha\tilde{\theta}_1$, and $\theta_{n+3} = \frac{\tilde{\theta}_{N+3}O_1}{O_2}$. Note that this change of variables does not affect our assumption that all $\theta$ are greater than zero. 

Now, we proceed to explicitly calculate $\chi_n(\lambda)$ by calculating the determinant. Expanding along the $n-1^{th}$ column, we have:

\[
\chi_{n}(\lambda) = (\frac{\theta_{n-1}}{M_{n-1}} - \lambda)
\begin{vmatrix}
\frac{\theta_1}{M_1} - \lambda & 0 & ... & 0 & \frac{-\theta_n}{M_1} & \frac{\theta_{n+2}}{M_1} \\
\frac{-\theta_1}{M_2} & \frac{\theta_2}{M_2} - \lambda & ...& 0 & 0 & \frac{-\theta_{n+2}}{M_2} \\
0 & \frac{-\theta_2}{M_3} & ... & ... & ... & ... \\
... & ... & ... & \frac{\theta_{n-2}}{M_{n-2}} - \lambda & 0 & 0 \\
0 & 0 & ... & 0 & \frac{\theta_{n} + \theta_{n+1}}{M_n} - \lambda & 0 \\
\frac{\theta_1}{O_1} & 0 & ... & 0& 0 & \frac{\theta_{n+2} + \theta_{n+3}}{O_1} - \lambda
\end{vmatrix}
\]
\[
+ \frac{\theta_{n-1}}{M_n}
\begin{vmatrix}
\frac{\theta_1}{M_1} - \lambda & 0 & ... & 0 & \frac{-\theta_n}{M_1} & \frac{\theta_{n+2}}{M_1} \\
\frac{-\theta_1}{M_2} & \frac{\theta_2}{M_2} - \lambda & ...& 0 & 0 & \frac{-\theta_{n+2}}{M_2} \\
0 & \frac{-\theta_2}{M_3} & ... & ... & ... & ... \\
... & ... & ... & \frac{\theta_{n-2}}{M_{n-2}} - \lambda & 0 & 0 \\
0 & 0 & ... & \frac{-\theta_{n-2}}{M_{n-1}} & 0 & 0 \\
\frac{\theta_1}{O_1} & 0 & ... & 0& 0 & \frac{\theta_{n+2} + \theta_{n+3}}{O_1} - \lambda
\end{vmatrix}
\].

The important thing to note is that now, the $n-1^{th}$ column in the first matrix and the $n^{th}$ row in the second are zero except for a single entry. We can continue expanding along such columns until we arrive at

\[
\chi_{n}(\lambda) = (\frac{\theta_{n-1}}{M_{n-1}} - \lambda)...(\frac{\theta_2}{M_2} - \lambda)
\begin{vmatrix}
\frac{\theta_1}{M_1} - \lambda & \frac{-\theta_n}{M_1} & \frac{\theta_{n+2}}{M_1} \\
0 & \frac{\theta_n + \theta_{n+1}}{M_n} - \lambda & 0 \\
\frac{\theta_1}{O_1} & 0 & \frac{\theta_{n+2} + \theta_{n+3}}{O_1} - \lambda
\end{vmatrix}
\]
\[
+ \frac{\theta_{n-1}}{M_n}...\frac{\theta_2}{M_3}
\begin{vmatrix}
\frac{\theta_1}{M_1} - \lambda & \frac{-\theta_n}{M_1} & \frac{\theta_{n+2}}{M_1} \\
\frac{-\theta_1}{M_2} & 0 & \frac{-\theta_{n+2}}{M_2} \\
\frac{\theta_1}{O_1} & 0 & \frac{\theta_{n+2} + \theta_{n+3}}{O_1} - \lambda
\end{vmatrix}
\].

Finally, we have, for $n\geq 3$:

\begin{align}
\chi_{n}(\lambda) = &\left( (\frac{\theta_1}{M_1} - \lambda)(\frac{\theta_{n+2} + \theta_{n+3}}{O_1} - \lambda) - \frac{\theta_{1}\theta_{n+2}}{M_1O_1} \right)(\frac{\theta_{n} + \theta_{n+1}}{M_n}-\lambda)\prod_{i = 2\hdots n-1}{(\frac{\theta_i}{M_i}-\lambda)} \\
&- (\frac{\theta_{n+3}}{O_1} - \lambda)\prod_{i = 1\hdots n}\frac{\theta_i}{M_i} \nonumber.
\end{align}

If we make a final change of variables so that $\bar{\theta}_i = \frac{\theta_i}{M_i}, i = 1 \hdots n, \bar{\theta}_{n+1} = \frac{\theta_{n+1}}{M_n}, \bar{\theta}_{n+2} = \frac{\theta_{n+2}}{O_1}, \bar{\theta}_{n+3} = \frac{\theta_{n+3}}{O_1}$, then we arrively precisely at (\ref{eq:definitions}).

\subsection*{Characteristic Polynomial for a Simple Metabolic Cycle}
\label{sec:appendix_simplecycle}
To give the reader more intuition regarding the mechanics of SKM calculations, we briefly describe how SKM may be used to model the dynamics of a simple, nonautocatalytic metabolic cycle with two metabolites and one cofactor pair. The reactions of this cycle are

\begin{align}
\emptyset &\longrightarrow A \nonumber \\
A + O_{1} &\longrightarrow B + O_{2} \nonumber \\
B &\longrightarrow A \nonumber \\
B &\longrightarrow \emptyset \nonumber \\
O_2 &\xrightarrow{energy} O_1 
\label{eq:reactions_simplecyc}
\end{align}

Then, the stoichiometric matrix $\mathbf{S}$, the normalized stoichiometric matrix $\mathbf{\Lambda}$, and the elasticity matrix $\mathbf{\Theta}$, may be written

\begin{equation}
\mathbf{S} = \left[
\begin{array}{ccccc}
1 & -1 & 1 & 0 & 0\\
\vspace{1pt} \\
0 & 1 & -1 & -1 & 0\\
\vspace{1pt} \\
0 & -1 & 0 & 0 & 1\\
\vspace{1pt} \\
0 & 1 & 0 & 0 & -1
\end{array}
\right],
\label{eq:simplecycleS}
\end{equation}

\begin{equation}
\mathbf{\Lambda} = v \left[
\begin{array}{ccccc}
\frac{\alpha}{A^0} & \frac{-1}{A^0} & \frac{1-\alpha}{A^0} & 0 & 0\\
\vspace{1pt} \\
0 & \frac{1}{B^0} & \frac{-(1-\alpha)}{B^0} & \frac{-\alpha}{B^0} & 0\\
\vspace{1pt} \\
0 & \frac{-1}{O_1^0} & 0 & 0 & \frac{1}{O_1^0}
\end{array}
\right],
\label{eq:simplecycleL}
\end{equation}

\begin{equation}
\mathbf{\Theta} = \left[
\begin{array}{ccc}
0 & 0 & 0 \\
\vspace{1pt} \\
\theta_1 & 0 & \theta_4 \\
\vspace{1pt} \\
0 & \theta_2 & 0 \\
\vspace{1pt} \\
0 & \theta_3 & 0 \\
\vspace{1pt} \\
0 & 0 & \frac{-O_1^0}{O_2^0}\theta_5
\end{array}
\right],
\label{eq:simplecycleT}
\end{equation}
where $v$ is an arbitrary unit of flux, $\alpha \in (0,1)$, and $A^0,B^0,$ and $O_1^0$ are the steady-state concentrations of $A,B,$ and $O_1$, respectively.

Note that $\mathbf{\Lambda}$ has precisely the same structure as $\mathbf{S}$, but with one fewer row (due to the mass conservation associated with the cofactor pair, discussed in the prior section). Finally, the Jacobian for this system may be straightforwardly written
\begin{equation}
\mathbf{J} = \mathbf{\Lambda\Theta} =  v \left[
\begin{array}{ccc}
\frac{-\theta_1}{A^0} & \frac{(1-\alpha)\theta_2}{A^0} & \frac{-\theta_4}{A^0}\\
\vspace{1pt} \\
\frac{\theta_1}{B^0} & \frac{-(1-\alpha)\theta_2 - \alpha \theta_3}{B^0} & \frac{\theta_4}{B^0}\\
\vspace{1pt} \\
\frac{\theta_1}{O_1^0}& 0 & \frac{-\theta_4}{O_1^0} - \frac{-\theta_5}{O_2^0}
\end{array}
\right],
\label{eq:simplecycleJ}
\end{equation}

\subsection*{Proof of the Real Stubborn Roots Theorem}
Consider the sum of two polynomials

\begin{align}
S &= P_n + Q_m \\
P_n &= (p_1 - \lambda)(p_2-\lambda)\hdots(p_n-\lambda), n \geq 2 \\
Q_m &= (q_1 - \lambda)(q_2-\lambda)\hdots(q_m-\lambda), m < n.
\label{eq:master}
\end{align}

Above, all roots $p_i$ and $q_i$ are real and positive. Further, we assume that $|P_n(0)| > |Q_1(0)|$. We claim that $P$ cannot have any roots lying in the left half of the complex plane. To prove this, we make use of the symmetric form of Rouche's Theorem (Theorem 1 in the text)  We will use $f = P = P_n +Q_m$ and $g = P_n$. Our goal will be to show that $|Q_m| < |P_n|$ on the contour, which by Rouch\'{e}'s Theorem gives us that $P_n$ and $S$ have the same number of zeros inside the contour (which is none, since all of the roots of $P_n$ are positive real numbers). By taking contours bounding larger and larger regions contained in and tending towards the entire left-half plane of $\mathbb{C}$, we will deduce that $S$ has no roots with negative real component.  \\
\\
We let our contour $C = C_R$ consist of two parts:
\begin{itemize}
	\item a large half circle in the left-half of the complex plane
	\item the portion of the imaginary axis connecting the two intersections of the above circle with the positive imaginary axis.
\end{itemize}

First, we verify that $|Q_m| < |P_n|$ on the half circle $\{|\lambda|=R\}$ in the left-half plane. Since $P_n$ is of higher order than $Q_m$, as $R \rightarrow \infty$, $|P_n|$ dominates $|Q_m|$.  

Next, we verify that $|Q_m| < |P_n|$ on the imaginary axis. Again substituting $\lambda = i y$ for $y > 0$ into (\ref{eq:master}), we have
\begin{align}
|P_n| &= \sqrt{(y^2+p_1^2) \ldots (y^2+p_n^2)} \\
|Q_m| &= b \sqrt{(y^2+q_1^2) \ldots (y^2+q_m^2)}.
\end{align}

Then, 

\begin{eqnarray}
b^2 \frac{d}{dy} \left(\frac{|P_n|^2}{|Q_m|^2}\right) = \frac{\prod_{i=1}^{n}(y^2+p_i^2)}{\prod_{j=1}^{m}(y^2+q_i^2)}, 
\end{eqnarray}

and it follows that

\begin{align}
b^2 \frac{d}{dy} \left(\frac{|P_n|^2}{|Q_m|^2}\right)  &= \frac{ \left(\prod_{j=1}^m (y^2+q_j^2) \right) \left(\sum_{i=1}^n \left(2y \prod_{k=1,k\neq i}^n (y^2+p_k^2) \right) \right) } { (\prod_{j=1}^m (y^2+q_j^2))^2} \nonumber \\
& - \frac{ \left(\prod_{i=1}^n (y^2+p_i^2) \right) \left(\sum_{j=1}^m \left(2y \prod_{k=1,k\neq j}^m (y^2+q_k^2) \right) \right) } { (\prod_{j=1}^m (y^2+q_j^2))^2}.
\end{align}

Rewriting this, we find
\begin{align}
b^2 \frac{d}{dy} \left(\frac{|P_n|^2}{|Q_m|^2}\right)  &= \frac{1}{\prod_{j=1}^m (y^2+q_j^2)} \left( \sum_{i=1}^n \left(2y \prod_{k=1,k\neq i}^n (y^2+p_k^2) \right) - \sum_{j=1}^m \left( \left( \frac{2y}{y^2+q_j^2} \right) \prod_{i=1}^n (y^2+p_i^2) \right) \right).
\end{align}

Ignoring the common denominator in both terms and considering only the first $m$ terms of the first series, we obtain
\begin{align}
\sum_{i=1}^m \left(2y \prod_{k=1,k\neq i}^n (y^2+p_k^2) \right) - \sum_{j=1}^m \left( \left( 2y\frac{y^2+p_j^2}{y^2+q_j^2} \right) \prod_{i=1,i \neq j}^n (y^2+p_i^2) \right).
\label{eq:general_sum}
\end{align}

Now, comparing the terms of each series in order, we observe precisely the same pattern as in the main text; since $p_j < q_j$ for all $j=1\hdots m$, (\ref{eq:general_sum}) is always greater than zero. Then, $|P_n| > |Q_m|$ on the positive imaginary axis, and by symmetry also on the negative imaginary axis. Applying Rouch\'{e}'s Theorem once again, we find that the number of roots of $S = P_n + Q_m$ in the positive (negative) real half of the complex plane is identical to the number of roots of $P_n$ in the positive (negative) real half of the complex plane.

\section*{Acknowledgments}
We are grateful to Daniel Segr\`{e} and Gene Wayne for insightful feedback. ER was supported in part by the National Science Foundation (NSF DMS-0602204 EMSW21-RTG, BioDynamics at Boston University). OC was supported in part by the National Science Foundation (NSF DMS-0908093).

\bibliography{cyclebib}

\begin{thebibliography}{10}

\bibitem{Afek2011}
Yehuda Afek, Noga Alon, Omer Barad, Eran Hornstein, Naama Barkai, and Ziv
  Bar-Joseph.
\newblock {A biological solution to a fundamental distributed computing
  problem.}
\newblock {\em Science (New York, N.Y.)}, 331(6014):183--5, January 2011.

\bibitem{Anderson1993}
Bruce Anderson.
\newblock {Polynomial Root Dragging}.
\newblock {\em The American Mathematical Monthly}, 100(9):864--866, 1993.

\bibitem{Angeli2004}
David Angeli, James~E Ferrell, and Eduardo~D Sontag.
\newblock {Detection of multistability, bifurcations, and hysteresis in a large
  class of biological positive-feedback systems.}
\newblock {\em Proceedings of the National Academy of Sciences of the United
  States of America}, 101(7):1822--7, February 2004.

\bibitem{Astrom2008}
Karl~Johan Astr\"{o}m and Richard~M. Murray.
\newblock {\em {Feedback Systems: An Introduction for Scientists and
  Engineers}}.
\newblock Princeton University Press, January 2008.

\bibitem{Berg2002}
Jeremy~M Berg, John~L Tymoczko, and Lubert Stryer.
\newblock {\em {Biochemistry}}.
\newblock W.H. Freeman, New York, 2002.

\bibitem{BrownJamesWardChurchill1996}
James~Ward Brown and Ruel~V. Churchill.
\newblock {\em {Complex Variables and Applications}}.
\newblock McGraw-Hill, Inc., New York, 1996.

\bibitem{Dasgupta1988}
Soura Dasgupta.
\newblock {Kharitonov's theorem revisited}.
\newblock {\em Systems \& Control Letters}, 11(5):381--384, November 1988.

\bibitem{Dittrich2007}
Peter Dittrich and Pietro~Speroni di~Fenizio.
\newblock {Chemical organisation theory.}
\newblock {\em Bulletin of mathematical biology}, 69(4):1199--231, May 2007.

\bibitem{Eigen1978}
Manfred Eigen and Peter Schuster.
\newblock {The Hypercycle}.
\newblock {\em Naturwissenschaften}, 65(1):7--41, January 1978.

\bibitem{Elowitz2000}
M~B Elowitz and S~Leibler.
\newblock {A synthetic oscillatory network of transcriptional regulators.}
\newblock {\em Nature}, 403(6767):335--8, January 2000.

\bibitem{Feinberg1987}
Martin Feinberg.
\newblock {Chemical reaction network structure and the stability of complex
  isothermal reactors—I. The deficiency zero and deficiency one theorems}.
\newblock {\em Chemical Engineering Science}, 42(10):2229--2268, January 1987.

\bibitem{Feinberg1988}
Martin Feinberg.
\newblock {Chemical reaction network structure and the stability of complex
  isothermal reactors—II. Multiple steady states for networks of deficiency
  one}.
\newblock {\em Chemical Engineering Science}, 43(1):1--25, January 1988.

\bibitem{Fell1992}
D~A Fell.
\newblock {Metabolic control analysis: a survey of its theoretical and
  experimental development.}
\newblock {\em The Biochemical journal}, 286 ( Pt 2:313--30, September 1992.

\bibitem{Fisk2006}
Steve Fisk.
\newblock {Polynomials, roots, and interlacing}.
\newblock page 784, December 2006.

\bibitem{Fogel2006}
David~B. Fogel.
\newblock {Evolutionary Computation: Toward a New Philosophy of Machine
  Intelligence (IEEE Press Series on Computational Intelligence)}.
\newblock January 2006.

\bibitem{Fung2005}
Eileen Fung, Wilson~W Wong, Jason~K Suen, Thomas Bulter, Sun-gu Lee, and
  James~C Liao.
\newblock {A synthetic gene-metabolic oscillator.}
\newblock {\em Nature}, 435(7038):118--22, May 2005.

\bibitem{Gehrmann2011}
Eva Gehrmann, Christine Gl\"{a}\ss~er, Yaochu Jin, Bernhard Sendhoff, Barbara
  Drossel, and Kay Hamacher.
\newblock {Robustness of glycolysis in yeast to internal and external noise}.
\newblock {\em Physical Review E}, 84(2):021913--, August 2011.

\bibitem{Grimbs2007}
Sergio Grimbs, Joachim Selbig, Sascha Bulik, Hermann-Georg Holzh\"{u}tter, and
  Ralf Steuer.
\newblock {The stability and robustness of metabolic states: identifying
  stabilizing sites in metabolic networks.}
\newblock {\em Molecular systems biology}, 3:146, January 2007.

\bibitem{Gross2004}
Thilo Gross and Ulrike Feudel.
\newblock {Analytical search for bifurcation surfaces in parameter space}.
\newblock {\em Physica D: Nonlinear Phenomena}, 195(3-4):292--302, August 2004.

\bibitem{Gross2009}
Thilo Gross, Lars Rudolf, Simon~A Levin, and Ulf Dieckmann.
\newblock {Generalized models reveal stabilizing factors in food webs.}
\newblock {\em Science (New York, N.Y.)}, 325(5941):747--50, August 2009.

\bibitem{Kreyssig2012}
Peter Kreyssig, Gabi Escuela, Bryan Reynaert, Tomas Veloz, Bashar Ibrahim, and
  Peter Dittrich.
\newblock {Cycles and the qualitative evolution of chemical systems.}
\newblock {\em PloS one}, 7(10):e45772, January 2012.

\bibitem{Kuehn2013}
Christian Kuehn and Thilo Gross.
\newblock {Nonlocal generalized models of predator-prey systems}.
\newblock {\em Discrete and Continuous Dynamical Systems - Series B},
  18(3):2013, 2013.

\bibitem{Link2013}
Hannes Link, Karl Kochanowski, and Uwe Sauer.
\newblock {Systematic identification of allosteric protein-metabolite
  interactions that control enzyme activity in vivo}.
\newblock {\em Nature Biotechnology}, advance on, March 2013.

\bibitem{May1973}
R~M May.
\newblock {Stability and complexity in model ecosystems.}
\newblock {\em Monographs in population biology}, 6:1--235, January 1973.

\bibitem{Piedrafita2010}
Gabriel Piedrafita, Francisco Montero, Federico Mor\'{a}n, Mar\'{\i}a~Luz
  C\'{a}rdenas, and Athel Cornish-Bowden.
\newblock {A simple self-maintaining metabolic system: robustness,
  autocatalysis, bistability.}
\newblock {\em PLoS computational biology}, 6(8):9, January 2010.

\bibitem{Ravasz2002}
E~Ravasz, A~L Somera, D~A Mongru, Z~N Oltvai, and A~L Barab\'{a}si.
\newblock {Hierarchical organization of modularity in metabolic networks.}
\newblock {\em Science (New York, N.Y.)}, 297(5586):1551--5, August 2002.

\bibitem{Reznik2013}
Ed~Reznik, Tasso~J. Kaper, and Daniel Segrè.
\newblock {The dynamics of hybrid metabolic-genetic oscillators}.
\newblock {\em Chaos: An Interdisciplinary Journal of Nonlinear Science},
  23(1):013132, March 2013.

\bibitem{Reznik2010}
Ed~Reznik and Daniel Segr\`{e}.
\newblock {On the stability of metabolic cycles.}
\newblock {\em Journal of theoretical biology}, 266(4):536--49, October 2010.

\bibitem{Shinar2010}
Guy Shinar and Martin Feinberg.
\newblock {Structural sources of robustness in biochemical reaction networks.}
\newblock {\em Science (New York, N.Y.)}, 327(5971):1389--91, March 2010.

\bibitem{Steuer2006}
Ralf Steuer, Thilo Gross, Joachim Selbig, and Bernd Blasius.
\newblock {Structural kinetic modeling of metabolic networks.}
\newblock {\em Proceedings of the National Academy of Sciences of the United
  States of America}, 103(32):11868--73, August 2006.

\bibitem{Strogatz1994}
Steven~H. Strogatz.
\newblock {\em {Nonlinear Dynamics and Chaos}}.
\newblock Westview Press, Cambridge, 1994.

\bibitem{vanNes2009}
P~van Nes, D~Bellomo, M~J~T Reinders, and D~de~Ridder.
\newblock {Stability from structure: metabolic networks are unlike other
  biological networks.}
\newblock {\em EURASIP journal on bioinformatics \& systems biology},
  2009(1):630695, January 2009.

\end{thebibliography}

\section*{Figure Legends}

\begin{figure}[!ht]
	\centering
		\includegraphics[width=0.5\columnwidth]{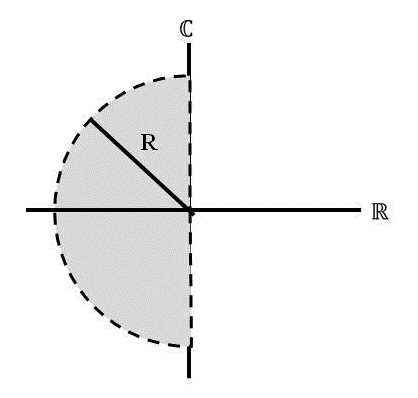}
	\label{fig:contour}
	\caption{The semicirular contour $C_R$ (dashed line) and the region it encloses (grey region) used in the proof of the cycle stability conjecture. By allowing the radius $R$ of the semicircle to go to infinity, we are able to encompass the entire left-half of the complex plane.}
\end{figure}

\begin{figure}[!ht]
	\centering
		\includegraphics[width=4in]{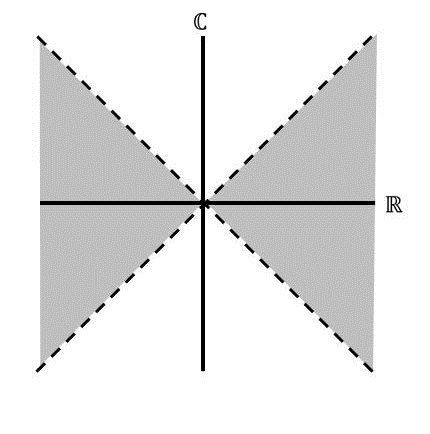}
	\label{fig:validity}
	\caption{Region of validity (in grey) for Stubborn Complex Roots Theorem. To apply the Stubborn Complex Roots Theorem, the complex roots of $P_n$ must have a smaller imaginary component than real component.}
\end{figure}

\section*{Short Title for Page Headings}
The Stubborn Roots of Metabolic Cycles

\end{document}